\documentclass[a4paper,USenglish,cleveref,thm-restate]{lipics-v2021}
\usepackage{style}
\usepackage{shortcuts}

\nolinenumbers
\hideLIPIcs

\usepackage{etoolbox}
\makeatletter
\patchcmd{\endabstract}{%
    \vskip\topmattervskip\baselineskip\noindent
    \subjclassHeading
    \ifx\@ccsdescString\@empty
        \textcolor{red}{Author: Please fill in 1 or more \string\ccsdesc\space macro}%
    \else
        \@ccsdescString
    \fi
    \vskip\topmattervskip\baselineskip
    \noindent\keywordsHeading
    \ifx\@keywords\@empty
        \textcolor{red}{Author: Please fill in \string\keywords\space macro}%
    \else
        \@keywords
    \fi}{}{}{}
\makeatother

\title{An Optimal Algorithm for Binary Closest String}
\author{Nick Fischer}{Max Planck Institute for Informatics, Saarbrücken, Germany}{nfischer@mpi-inf.mpg.de}{https://orcid.org/0009-0001-0909-3296}{}
\author{Mursalin Habib}{Rutgers University, New Brunswick, NJ, USA}{mursalin.habib@rutgers.edu}{https://orcid.org/0000-0002-6671-4669}{Supported by the National Science Foundation under Grants CCF-2313372, CCF-2422558, and CCF-2443697.}
\authorrunning{N. Fischer and M. Habib}
\funding{Part of this work was done while visiting INSAIT at Sofia University ``St. Kliment Ohridski'' and funded by the Bulgarian Ministry of Education and Science through its support of INSAIT under the Bulgarian National Roadmap for Research Infrastructure.}

\acknowledgements{We are very grateful to Karthik C.S.\ for fruitful discussions at an early stage of this project.}

\ccsdesc[500]{Theory of computation~Parameterized complexity and exact algorithms}
\keywords{Closest String, Parameterized Complexity, Fine-Grained Complexity}

\begin{document}

\maketitle

\begin{abstract}
We revisit the \emph{Binary Closest String} problem, which asks, given a set of binary strings $X \subseteq \{0, 1\}^n$, to compute a string minimizing the maximum Hamming distance to $X$. A long line of work has focused on parameterized algorithms with respect to the optimal distance $d$, yielding a sequence of improvements from $O^*(d^d)$ through $O^*(16^d)$, $O^*(9.513^d)$, $O^*(8^d)$, $O^*(6.731^d)$ to the current best-known running time of $O^*(5^d)$ [Chen, Ma, Wang; Algorithmica '16].

We present a faster randomized algorithm running in time $O^*(4^d)$. Our result matches a recent fine-grained lower bound [Abboud, Fischer, Goldenberg, Karthik C.S., Safier; ESA ’23], and is therefore conditionally \emph{optimal}. As an extra benefit, our algorithm is remarkably simple.
\end{abstract}

\section{Introduction}
The \emph{Closest String} problem asks, given a set of strings $X \subseteq \Sigma^n$, to find a string $y^* \in \Sigma^n$ that minimizes the maximum Hamming distance to the strings in $X$. It constitutes one of the most fundamental clustering tasks on strings, and is also of great importance in bioinformatics, for instance in the computer-aided design of universal PCR primers~\cite{LanctotLMWZ03}, motif finding~\cite{GrammHN02}, and drug design~\cite{DengLLMW03}.

The Closest String problem is extremely well-studied, from both the theory and applied computational biology communities. It is known to be NP-hard~\cite{FrancesL97,LanctotLMWZ03} (even for binary strings), and well-understood through the lens of approximation~\cite{Ben-DorLPR97,GasieniecJL99,LanctotLMWZ03,LiMW02,AndoniIP06,MaS09}. Here we focus on the parameterized setting, where the goal is to design algorithms whose running time depends on the optimal distance $d$. This setting has concrete practical relevance, as for the aforementioned biology applications $d$ is typically very small. Over the past two decades, a long sequence of papers has developed progressively faster parameterized algorithms~\cite{GrammNR03,MaS09,ChenW11,WangZ09,ZhaoZ10,ChenMW12,ChenMW16}. And yet, despite this sustained effort, it remained open to pinpoint the precise complexity of Closest String---even in the most basic regime of binary strings (i.e., for $|\Sigma| = 2$).

Concretely, the first parameterized algorithm was given by Gramm, Niedermeier, and Rossmanith~\cite{GrammNR03} and runs in time\footnote{Here and throughout, the $O^*(\cdot)$ notation hides polynomial factors in the input size, $\poly(|X|, n)$.} $O^*(d^d)$. The subsequent milestones for Binary Closest String are algorithms in time $O^*(16^d)$~\cite{MaS09}, $O^*(9.513^d)$~\cite{WangZ09}, $O^*(8^d)$~\cite{ZhaoZ10}, $O^*(6.731^d)$~\cite{ChenMW12}, before culminating in the current state-of-the-art $O^*(5^d)$ due to Chen, Ma, and Wang~\cite{ChenMW16}. On the converse side, a recent lower bound construction due to Abboud, Fischer, Goldenberg, Karthik~C.S., and Safier~\cite{AbboudFGSS23} implies that there is no $O^*((4-\epsilon)^d)$-time algorithm for Binary Closest String, for any constant~\makebox{$\epsilon > 0$}, unless the Strong Exponential-Time Hypothesis (SETH) from fine-grained complexity fails.\footnote{In detail: Abboud, Fischer, Goldenberg, Karthik~C.S., and Safier~\cite{AbboudFGSS23} show that Binary Closest String cannot be solved in time $O^*((2-\epsilon)^n)$. Inspecting their lower bound reveals that it applies to instances with $d = (1/2 \pm o(1)) n$, implying that Binary Closest String cannot be solved in time \smash{$O^*((4-\epsilon)^d)$} in the regime where $d = (1/2 \pm o(1)) n$. A simple padding argument extends this lower bound to the full range $d \leq n/2$. For $d > n/2$ the brute-force $O^*(2^n)$-time algorithm becomes more efficient.}

This leaves an exponential gap between the $5^d$ upper bound and the $4^d$ lower bound. In this paper we finally close this gap, and thereby resolve the parameterized complexity of Binary Closest String:

\begin{theorem} \label{thm:main}
Binary Closest String can be solved in randomized time $O^*(4^d)$.
\end{theorem}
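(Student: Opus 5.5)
The plan is a randomized local search from an input string, analysed as a biased random walk on the distance to a fixed optimal solution. Fix an optimal solution $y^*$ with $\max_{x \in X} d_H(y^*, x) \le d$; by binary search or trying all values we may assume $d$ is known. Start from $y := x_1 \in X$, so $k := d_H(y, y^*) \le d$ initially.

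While some $x \in X$ has $d_H(y,x) > d$, the algorithm picks such an $x$, choosing one that maximizes $d_H(y,x)$. Let $D$ be the set of positions where $y$ and $x$ differ. The algorithm flips a uniformly random position $i \in D$ among those not flipped earlier in the current run. It stops after $O(d)$ steps, or once $y$ is within distance $d$ of every string in $X$. The whole run is repeated $O^*(4^d)$ times.

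\textbf{Local counting.} Split $D$ into $A$, where $y^*$ agrees with $x$ (good flips), and $B$, where $y^*$ agrees with $y$ (bad flips). Let $C$ be the positions outside $D$ where $y^*$ differs from $y$, and write $a = |A|$, $b = |B|$, $c = |C|$. Then $k = a + c$ and $d_H(y^*, x) = b + c \le d < a + b$, which gives $a > c$ and hence $a > k/2$. A good flip decreases $k$ by one and a bad flip increases it by one.

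\textbf{Why naive counting fails.} Using only the above yields a success probability per step of order $k/(2d)$. The product of these over $k = d, \dots, 1$ is about $(2e)^{-d}$, which is far too weak. The analysis must therefore use more structure, in two ways.

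\begin{itemize}
\item \emph{Never flip a position twice.} This lets us charge bad flips against the finite pool of $n - k$ positions where $y$ already agrees with $y^*$. More usefully, it shows that once a bad flip is made, later sets $D$ are constrained in a way we can exploit.
\item \emph{Allow bad steps in the walk.} As in Sch\"oning's analysis of $k$-SAT, we do not require every step to be good. We count walks of length $k + 2j$ with $j$ bad steps, weighted by their probabilities. The target is to show that the probability of reaching $k = 0$ within $O(d)$ steps is at least $4^{-d}/\mathrm{poly}(n)$, that is, a per-unit-of-distance rate of $1/4$.
\end{itemize}

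\textbf{Expected main obstacle.} The hard step is obtaining this $1/4$ rate, since the naive local bound gives only a constant-per-step guarantee that degrades as $k$ shrinks. To overcome this, I would replace the single-step ratio $a/(a+b)$ by an amortized potential. A natural candidate is $\Phi = 2k + (\text{number of flipped positions})$, combined with the inequality $d_H(y,x) > d \ge d_H(y^*,x)$ applied cumulatively over the run. The goal is to show that each step decreases $\Phi$ in expectation, with multiplicative probability loss at most $1/2$ per unit of $\Phi$. Since $\Phi$ starts at most $2d$, this would give overall success probability $2^{-2d} = 4^{-d}$.

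If this potential does not close, the fallback is to modify the algorithm. For example, the flipped position could be restricted to a random subset of $D$ of size exactly $d + 1 - (\text{budget used})$, in the spirit of the Gramm--Niedermeier--Rossmanith branching. The rate computation would then be redone for that variant.
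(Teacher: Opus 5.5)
Your proposal has a genuine gap. The step that carries the whole theorem, going from the $(2e)^{-d}$ per-run bound your local counting gives to $4^{-d}$, is stated only as a goal. The mechanisms you suggest for it do not work.

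\textbf{Your two structural ideas contradict each other.} Under the no-repeat rule, a bad flip at position $i$ makes $y[i]\neq y^*[i]$. Since $i$ can never be flipped again in that run, $k=0$ becomes unreachable. So walks with $j\geq 1$ bad steps contribute nothing to reaching $k=0$. You are left with all-good runs, for which you have no bound better than the $(2e)^{-d}$ you already identified.

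\textbf{The potential does not behave as claimed.} $\Phi=2k+(\text{number of flips})$ drops by $1$ on a good step and rises by $3$ on a bad one. So it decreases in expectation only when $a>3b$, which $a>c$ does not give. Moreover $\Phi$ never reaches $0$: at $k=0$ it equals the number of flips, which is at least $k_0$. So ``start at $2d$, pay $1/2$ per unit'' is not a computation of anything. Charged against the actual decrease $k_0$ of an all-good run, it would give an $O^*(2^d)$ algorithm, contradicting the SETH lower bound.

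\textbf{Short restarts cannot reach $4^{-d}$ from the per-step bound alone.} This holds even if you drop the no-repeat rule. The Ehrenfest chain on $2d$ balls has down-probability exactly $k/(2d)$ and stationary law $\mathrm{Bin}(2d,1/2)$, and it satisfies the per-step bound. By reversibility, $P^s(d,0)=\binom{2d}{d}^{-1}P^s(0,d)$. Started at $0$, after $s\leq cd$ steps the chain concentrates a linear distance below $d$, so $P^s(0,d)=e^{-\Omega_c(d)}$. Hence the probability of hitting $0$ within $cd$ steps is $4^{-d}e^{-\Omega_c(d)}$, and restarts of length $O(d)$ give at best $(4+\epsilon)^d$.

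\textbf{The missing idea.} Drop both the restarts and the no-repeat rule, run a \emph{single} walk of length $O^*(4^d)$, and bound its expected hitting time of state $0$. Your local counting is exactly the right lemma, but you need it in every state, including $k>d$. From $a/(a+b)\geq\bigl((a+b)-d+k\bigr)/\bigl(2(a+b)\bigr)$ and $a+b>d$, the down-probability is at least $p_k=\min\{k/(2d),1/2\}$. Thus above $d$ the walk is at worst unbiased, and since it is confined to $\{0,\dots,n\}$, it comes back.

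Let $D_k$ be the expected number of decrements needed to get from $k$ to $k-1$. Then $D_n=1$ and $D_k\leq 1+\frac{1-p_k}{p_k}D_{k+1}$. Unrolling, and using that $\frac{1-p_m}{p_m}\geq 1$ with equality for $m\geq d$, gives $D_k\leq n\prod_{m=1}^{d}\frac{2d-m}{m}=n\binom{2d-1}{d}\leq n4^d/2$. Sum over $k\leq n$ and note that increments never outnumber decrements. The expected number of iterations is then at most $4^dn^2$, and Markov's inequality with repetition finishes the proof. This is the paper's route. The walk is free to drift back up past $d$ and try again, which is exactly what your short-restart scheme forbids.
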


Notably, our algorithm is not only conditionally optimal, but also pleasingly \emph{simple}. It can be stated in 6 lines of pseudocode (see \cref{alg:main}), and is also conceptually simple. In fact, our algorithm is almost identical to a simple local search algorithm by Chen, Ma, and Wang~\cite{ChenMW16} that runs in time $O^*((2e)^d) = O^*(5.437^d)$  (from same paper that established the previous state of the art). We propose a technical yet important nuance: instead of running~$O^*((2e)^d)$ independent local searches of length $O(d)$, we run \emph{one} local search of length~$O^*(4^d)$. Our analysis is based on Markov chains and conceptually inspired by Schöning's seminal algorithm~\cite{Schoning99} for $k$-SAT. The specific Markov chain turns out to be more complicated than in Schöning's algorithm, but viewed the right way its analysis does not take more than a page.

To conclude, while our new algorithm will perhaps not have immediate practical impact, it does constitute the first optimal result in a long line of work~\cite{GrammNR03,MaS09,ChenW11,WangZ09,ZhaoZ10,ChenMW12,ChenMW16}. For this reason we are confident that it may form the baseline to make progress on more general variants, e.g., for larger alphabets or for the strictly more general and even more practically important \emph{Closest Substring} problem.

\section{Formal Definitions}
The \emph{Hamming distance} of two strings $x, y \in \Sigma^n$ is defined as \makebox{$\HD(x, y) = |\set{ j : x[j] \neq y[j] }|$}. The \emph{Closest String} problem is, given a set $X \subseteq \Sigma^n$, to find a string $y^*$ that minimizes $\max_{x \in X} \HD(x, y^*)$. We define the parameter $d = \max_{x \in X} \HD(x, y^*)$ as the optimal distance. The \emph{Binary Closest String} problem is the restriction to alphabets of size $|\Sigma| = 2$.

\section{Algorithm}
This section is devoted to the proof of \cref{thm:main}. Throughout, to solve the Closest String problem, we assume that as part of the input we are given a parameter $1 \leq d < n$ and the goal is to decide if there is a center string $y^* \in \set{0, 1}^n$ within distance $\max_{x \in X} \HD(x, y) \leq d$. The optimal $d$ can then be identified by testing $d \gets 1, 2, \dots$ one by one; the overhead of this search is only a constant factor in our final $O^*(4^d)$-time algorithm.

\begin{algorithm}[t]
\caption{The $O^*(4^d)$-time algorithm for Binary Closest String} \label{alg:main}
\begin{algorithmic}[1]
    \Input{A set of strings $X \subseteq \set{0, 1}^n$ and an integer $d \geq 0$}
    \Output{A string $y^* \in \set{0, 1}^n$ with $\max_{x \in X} \HD(x, y^*) \leq d$ (if it exists)}
    \medskip

    \State Let $y \in X$ be arbitrary
    \RepeatInf
        \State Choose any string $x \in X$ maximizing $\HD(x, y)$
        \If{$\HD(x, y) \leq d$} \Return $y$\EndIf
        \State Select an index $j \in [n]$ with $x[j] \neq y[j]$ uniformly at random
        \State Set $y[j] \gets x[j]$
    \EndRepeatInf
\end{algorithmic}
\end{algorithm}

Consider the pseudocode in \cref{alg:main}. The algorithm is designed in such a way that it will never return an incorrect output. In particular, if there is no solution then the algorithm does not terminate. In the following we assume that there is a solution $y^*$, and argue that the algorithm will terminate in time $O^*(4^d)$ with high probability.

Throughout fix an optimal solution $y^*$ with $\max_{x \in X} \HD(x, y^*) \leq d$ (if there are several, fix an arbitrary one). We say that the algorithm is in \emph{state} $i$ if~\makebox{$\HD(y, y^*) = i$}. Initially, the algorithm starts in some state $i \leq d$ (as $\HD(y, y^*) \leq d$ for whichever element~\makebox{$y \in X$} we picked). In each iteration the algorithm flips exactly one bit in $y$, and thus the current state~$i$ either increases or decreases by exactly $1$. Once the algorithm reaches state $0$ it certainly terminates (though it might also terminate before if there is more than one optimal solution). The following lemma is based on the ideas from~\cite{ChenMW16}.

\begin{lemma} \label{lem:progress}
In any iteration with state $i > 0$ the algorithm transitions to state $i-1$ with probability at least $p_i = \min\set{\frac{i}{2d}, \frac{1}{2}}$.
\end{lemma}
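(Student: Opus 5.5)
The plan is a direct counting argument on the chosen string $x$ in an iteration that does not terminate. Fix such an iteration with current string $y$ in state $i = \HD(y, y^*) > 0$. Since the algorithm did not return, the chosen $x$ satisfies $m := \HD(x, y) > d$. Let $D = \set{j : x[j] \neq y[j]}$ be the set from which $j$ is sampled uniformly, so $|D| = m$. Let $A = \set{j : y[j] \neq y^*[j]}$, so $|A| = i$. Because the alphabet is binary, flipping $y[j]$ to $x[j]$ for $j \in D$ decreases the state exactly when $y[j] \neq y^*[j]$, i.e.\ when $j \in D \cap A$. Hence the transition probability to state $i-1$ is exactly $a/m$, where $a = |D \cap A|$. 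Set also $b = |D \setminus A|$, so that $a + b = m$.

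The key step is to bound $a$ using the optimality of $y^*$, namely $\HD(x, y^*) \leq d$. I will classify positions by the binary alphabet:
\begin{itemize}
\item For $j \in D \setminus A$ we have $y[j] = y^*[j] \neq x[j]$, so $x$ and $y^*$ disagree at $j$.
\item For $j \in A \setminus D$ we have $x[j] = y[j] \neq y^*[j]$, so again $x$ and $y^*$ disagree.
\item For $j \in D \cap A$ they agree, since $x[j] \neq y[j] \neq y^*[j]$ forces $x[j] = y^*[j]$.
\item Positions outside $D \cup A$ contribute nothing.
\end{itemize}
Therefore $\HD(x, y^*) = b + (i - a) \leq d$, i.e.\ $a \geq b + i - d$. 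Combined with $a + b = m$, this gives $2a \geq m + i - d$.

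It remains to turn this into the claimed bound $p_i = \min\set{\frac{i}{2d}, \frac12}$ by splitting into two cases.
\begin{itemize}
\item If $i \geq d$, then $a \geq b$, so $a/m \geq \frac12$.
\item If $i < d$, then
\[
\frac{a}{m} \;\geq\; \frac{m+i-d}{2m} \;=\; \frac12 - \frac{d-i}{2m} \;\geq\; \frac12 - \frac{d-i}{2d} \;=\; \frac{i}{2d},
\]
using $m > d$ and $d - i > 0$.
\end{itemize}
In both cases the transition probability is at least $p_i$.

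The only delicate point I anticipate is the exact bookkeeping of which positions contribute to $\HD(x, y^*)$, since this relies on the binary alphabet. The fact that $x$ is a farthest string is actually not needed beyond $m > d$; the non-termination condition alone supplies that.
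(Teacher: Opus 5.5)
Your proof is correct and follows the paper's argument essentially verbatim: your $a$, $b$, and $i-a$ are the paper's good, bad, and useless counts $G, B, U$, and your inequality $2a \geq m + i - d$ is exactly the paper's bound $\frac{G}{G+B} \geq \frac{(G+B)-d+i}{2(G+B)}$. The only cosmetic difference is in the case $i < d$, where you conclude by direct algebra using $m > d$, while the paper argues via monotonicity of $\frac{x-d+i}{2x}$ in $x$.
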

\begin{proof}
Let $y, x$ be as in the current iteration. We distinguish three types of indices $j$: The \emph{good} ones with $x[j] = y^*[j] \neq y[j]$, the \emph{bad} ones with $x[j] \neq y[j] = y^*[j]$, and the \emph{useless} ones with $x[j] = y[j] \neq y^*[j]$. (As the alphabet is binary, this only leaves the indices $j$ with $x[j] = y[j] = y^*[j]$ which play no role in the analysis.) Let $G, B, U$ denote the number of good, bad and useless indices, respectively. Observe that
\begin{alignat*}{3}
    &\HD(x, y) &&= G + B &&> d, \\
    &\HD(x, y^*) &&= B + U &&\leq d, \\
    &\HD(y, y^*) &&= G + U &&= i.
\end{alignat*}
By design the algorithm picks a uniformly random index among the good and bad ones, and succeeds in decreasing the state of the algorithm if it picks a good one. In other words, the probability to move to state $i - 1$ is exactly
\begin{equation*}
    \frac{G}{G + B} = \frac{(G + B) - (B + U) + (G + U)}{2(G + B)} \geq \frac{(G + B) - d + i}{2(G + B)}.
\end{equation*}
If $d \geq i$, then the function $\frac{x - d+i}{2x}$ is non-decreasing for positive $x$. If $d < i$, the function is instead decreasing, but always lower-bounded by $\frac12$. It follows that
\begin{equation*}
    \frac{G}{G + B} \geq \min\set*{\frac{d - d + i}{2d}, \frac12} = \min\set*{\frac{i}{2d}, \frac12},
\end{equation*}
which is as claimed.
\end{proof}

\cref{lem:progress} allows us to model the execution by means of a \emph{Markov chain} with transition probabilities as described in the lemma. More precisely, the textbook approach is to define a Markov chain that is \emph{coupled} with the state of the algorithm, and to analyze the hitting time of our target state $0$ in that coupled Markov chain. However, to minimize the notational overhead we here prefer a direct approach.

\begin{lemma} \label{lem:hitting-time}
The expected number of iterations to reach state $0$ is at most $4^d \cdot n^2$.
\end{lemma}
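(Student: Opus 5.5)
The plan is to compare the algorithm with a birth--death chain on $\{0,1,\dots,n\}$ that moves from $i$ to $i-1$ with probability exactly $p_i=\min\{\frac{i}{2d},\frac12\}$ and to $i+1$ with probability $q_i=1-p_i$. At state $n$ we set $p_n=1$, which is forced anyway since $\HD(y,y^*)\le n$.

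\textbf{Step 1 (comparison).} I would couple the algorithm with this chain. In each iteration draw one uniform $u\in[0,1]$. The chain steps down iff $u\le p_i$. The algorithm steps down iff $u$ is at most its true down-probability, which is at least $p_i$ by \cref{lem:progress}. By induction, the algorithm's state never exceeds the chain's state when both start at the same $i_0\le d$. This uses the $\pm1$ step structure: if the two states are equal, the algorithm steps down whenever the chain does. If they differ, they differ by an even amount, hence by at least $2$, so the order is preserved after one step. Consequently the algorithm reaches $0$ no later than the chain, and it suffices to bound the chain's expected hitting time of $0$ from $i_0\le d$.

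\textbf{Step 2 (hitting-time recursion).} Let $T_i$ be the expected time for the chain to go from $i$ to $i-1$. First-step analysis gives $T_n=1$ and $T_i=(1+q_iT_{i+1})/p_i$. Unrolling,
\[
T_i=\sum_{k=i}^{n}\frac1{p_k}\prod_{m=i}^{k-1}\frac{q_m}{p_m}.
\]
The expected hitting time of $0$ from $i_0$ is $\sum_{i=1}^{i_0}T_i\le\sum_{i=1}^{d}T_i$.

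\textbf{Step 3 (bounding the products).} For $m\ge d$ we have $q_m/p_m\le1$. For $1\le m<d$ we have $q_m/p_m=(2d-m)/m\ge1$. Hence every partial product $\prod_{m=i}^{k-1}q_m/p_m$ is at most
\[
P_i:=\prod_{m=i}^{d-1}\frac{2d-m}{m}.
\]
Also $1/p_k\le 2d$ for every $k$, so $T_i\le n\cdot 2d\cdot P_i$. A factorial computation gives
\[
P_i=\frac{(2d-i)!\,(i-1)!}{d!\,(d-1)!}=\binom{2d-i}{d}\Big/\binom{d-1}{i-1}\le\binom{2d-i}{d}\le 2^{2d-i-1}.
\]
Therefore $\sum_{i=1}^{d}P_i\le 4^d/2$. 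Combining, $\sum_{i\le d}T_i\le 2dn\cdot 4^d/2=dn\,4^d\le n^2 4^d$, using $d<n$.

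\textbf{Main obstacle.} The only delicate point is the comparison in Step 1: \cref{lem:progress} gives merely a lower bound on the down-probability, which may depend on the history. The coupling must therefore be set up iteration by iteration with shared randomness, using the parity and $\pm1$ structure as described. The remaining steps are routine estimates.
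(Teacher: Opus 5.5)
Your proof is correct. It follows the same skeleton as the paper: Lemma~\ref{lem:progress}, a birth--death chain with down-probabilities $p_i$, an unrolled first-step recurrence, and a binomial bound. The bookkeeping differs in two places, both worth noting.

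\textbf{The comparison step.} You make it rigorous with an explicit coupling. Your parity argument is exactly what is needed to keep the algorithm's state at most the chain's state, even though the true down-probability $G/(G+B)$ depends on the current configuration. For full precision, the algorithm's move can be realized as ``if $u$ is at most the true down-probability, pick a uniform good index, else a uniform bad index,'' using fresh randomness for the index. This leaves the algorithm's law unchanged. The paper names this coupling as the textbook approach and deliberately skips it. It writes the recurrence directly for the algorithm instead, tacitly reading $D_i$ as a bound that holds uniformly over all configurations in state $i$.

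\textbf{What is counted.} The paper counts \emph{decrements} rather than iterations. This makes its recurrence $D_i \le 1 + \frac{1-p_i}{p_i} D_{i+1}$ free of any $1/p_k$ factor. It then uses the crude uniform bound $D_i \le n\binom{2d-1}{d}$ for every $i \le n$, sums over all $n$ states, and doubles at the end, since increments never exceed decrements. You count iterations directly and so pay the extra factor $1/p_k \le 2d$. You recover it by using $i_0 \le d$ and the $i$-dependent estimate $P_i \le \binom{2d-i}{d} \le 2^{2d-i-1}$, whose geometric sum gives the slightly sharper $d\,n\,4^d$.

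In short, the paper's version is shorter and avoids coupling notation. Yours is more careful about the non-Markovian nature of the algorithm and gets a marginally better bound.
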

\begin{proof}
Let $D_i$ denote the expected number of decrements (i.e., iterations that decrease the state) to move from state~$i$ to state~\makebox{$i-1$}. We trivially have $D_n = 1$. For any state $0 < i < n$, \cref{lem:progress} shows that with probability at least~$p_i$ we move to $i-1$ in one transition. Otherwise, with probability at most $1 - p_i$, we first transition to $i+1$ and then the expected number of decrements to visit state $i - 1$ is at most $D_{i+1} + D_i$. This yields the recurrence
\begin{equation*}
    D_i \leq p_i \cdot 1 + (1 - p_i) \cdot (D_{i+1} + D_i),
\end{equation*}
or equivalently,
\begin{equation*}
    D_i \leq 1 + \frac{1 - p_i}{p_i} \cdot D_{i+1}.
\end{equation*}
Unrolling the recurrence gives, for all $0 < i \leq n$,
\begin{equation*}
    D_i \leq \sum_{j=i}^n \prod_{m=i}^{j-1} \frac{1-p_m}{p_m}.
\end{equation*}
Now recall that $p_m = \min\set{\frac{m}{2d}, \frac{1}{2}}$. Thus, for $m \geq d$ the quotient $\frac{1-p_m}{p_m}$ becomes exactly $1$, and for $m < d$ it is certainly at least $1$. It follows that
\begin{equation*}
    D_i \leq \sum_{j=i}^n \prod_{m=i}^{d} \frac{1-p_m}{p_m} \leq n \prod_{m=1}^d \frac{1-p_m}{p_m} = n \prod_{m=1}^d \frac{2d - m}{m} = n \cdot \binom{2d - 1}{d}.
\end{equation*}
Therefore, the expected number of decrements to reach state $0$ from an arbitrary starting state~$i_0$ is at most
\begin{equation*}
    \sum_{i=1}^{i_0} D_i \leq \sum_{i=1}^n D_i \leq n^2 \cdot \binom{2d-1}{d} \leq \frac{4^d \cdot n^2}{2}.
\end{equation*}
Finally, in any sequence from state $i_0$ to $0$ the total number of increments is always upper bounded by the number of decrements, and so the expected total number of iterations is at most $4^d \cdot n^2$.
\end{proof}

The proof of \cref{thm:main} is now a simple consequence of \cref{lem:hitting-time}: Within $4^d \cdot n^2$ iterations the algorithm is expected to reach state $0$. Thus, within $4^d \cdot 2n^2$ iterations the algorithm terminates with probability at least $1/2$ (by Markov's inequality), and within $4^d \cdot n^3$ iterations the algorithm fails to terminate with exponentially small error probability $\exp(-\Omega(n))$. In particular, after $4^d \cdot n^3$ iterations we can interrupt \cref{alg:main} to confidently claim that there is no center in distance~$d$. Each iteration runs in polynomial time, and thus the total running time is $O^*(4^d)$.

\begin{remark}
We finally comment on a micro-optimization of \cref{alg:main}. The term $n^2$ in \cref{lem:hitting-time} (and in the resulting running time) can easily be replaced by $d^2$ by making sure that the algorithm only traverses the states $\set{0, \dots, O(d)}$. To achieve this, one option is to add another line to the algorithm: When $\max_{x \in X} \HD(x, y) > 2d$ then reset $y$ to an arbitrary string in $X$. Indeed, in this case the triangle inequality implies $\HD(y, y^*) > d$, and hence the reset never increases the state. Combined with a careful implementation of \cref{alg:main}, this results in an algorithm with expected running time $O(n |X| + 4^d d^2 |X|)$.
\end{remark}

\bibliographystyle{plainurl}
\bibliography{main}

@inproceedings{Schoning99,
  author       = {Uwe Sch{\"{o}}ning},
  title        = {A Probabilistic Algorithm for k-SAT and Constraint Satisfaction Problems},
  booktitle    = {40th {IEEE} Symposium on Foundations of Computer Science ({FOCS} 1999)},
  pages        = {410--414},
  publisher    = {{IEEE} Computer Society},
  year         = {1999},
  url          = {https://doi.org/10.1109/SFFCS.1999.814612},
  doi          = {10.1109/SFFCS.1999.814612},
  bibsource    = {dblp computer science bibliography, https://dblp.org}
}

@inproceedings{AndoniIP06,
  author       = {Alexandr Andoni and
                  Piotr Indyk and
                  Mihai P{\u{a}}tra{\c{s}}cu},
  title        = {On the Optimality of the Dimensionality Reduction Method},
  booktitle    = {47th {IEEE} Symposium on Foundations of Computer Science ({FOCS} 2006)},
  pages        = {449--458},
  publisher    = {{IEEE} Computer Society},
  year         = {2006},
  url          = {https://doi.org/10.1109/FOCS.2006.56},
  doi          = {10.1109/FOCS.2006.56},
  bibsource    = {dblp computer science bibliography, https://dblp.org}
}

@inproceedings{GasieniecJL99,
  author       = {Leszek Gasieniec and
                  Jesper Jansson and
                  Andrzej Lingas},
  editor       = {Robert Endre Tarjan and
                  Tandy J. Warnow},
  title        = {Efficient Approximation Algorithms for the Hamming Center Problem},
  booktitle    = {10th {ACM-SIAM} Symposium on Discrete Algorithms ({SODA} 1999)},
  pages        = {905--906},
  publisher    = {{ACM/SIAM}},
  year         = {1999},
  url          = {http://dl.acm.org/citation.cfm?id=314500.315081},
  bibsource    = {dblp computer science bibliography, https://dblp.org}
}

@inproceedings{AbboudFGSS23,
  author       = {Amir Abboud and
                  Nick Fischer and
                  Elazar Goldenberg and
                  {Karthik {C. S.}} and
                  Ron Safier},
  editor       = {Inge Li G{\o}rtz and
                  Martin Farach{-}Colton and
                  Simon J. Puglisi and
                  Grzegorz Herman},
  title        = {Can You Solve Closest String Faster Than Exhaustive Search?},
  booktitle    = {31st Annual European Symposium on Algorithms, {ESA} 2023, Amsterdam,
                  The Netherlands, September 4-6, 2023},
  series       = {LIPIcs},
  volume       = {274},
  pages        = {3:1--3:17},
  publisher    = {Schloss Dagstuhl - Leibniz-Zentrum f{\"{u}}r Informatik},
  year         = {2023},
  url          = {https://doi.org/10.4230/LIPIcs.ESA.2023.3},
  doi          = {10.4230/LIPICS.ESA.2023.3},
  bibsource    = {dblp computer science bibliography, https://dblp.org}
}

@inproceedings{Ben-DorLPR97,
  author       = {Amir Ben{-}Dor and
                  Giuseppe Lancia and
                  Jennifer Perone and
                  R. Ravi},
  editor       = {Alberto Apostolico and
                  Jotun Hein},
  title        = {Banishing Bias from Consensus Sequences},
  booktitle    = {8th Annual Symposium on Combinatorial Pattern Matching ({CPM} 1997)},
  series       = {Lecture Notes in Computer Science},
  volume       = {1264},
  pages        = {247--261},
  publisher    = {Springer},
  year         = {1997},
  url          = {https://doi.org/10.1007/3-540-63220-4\_63},
  doi          = {10.1007/3-540-63220-4\_63},
  bibsource    = {dblp computer science bibliography, https://dblp.org}
}

@article{LiMW02,
  author       = {Ming Li and
                  Bin Ma and
                  Lusheng Wang},
  title        = {On the closest string and substring problems},
  journal      = {J. {ACM}},
  volume       = {49},
  number       = {2},
  pages        = {157--171},
  year         = {2002},
  url          = {https://doi.org/10.1145/506147.506150},
  doi          = {10.1145/506147.506150},
  bibsource    = {dblp computer science bibliography, https://dblp.org}
}

@article{DengLLMW03,
  author       = {Xiaotie Deng and
                  Guojun Li and
                  Zimao Li and
                  Bin Ma and
                  Lusheng Wang},
  title        = {Genetic Design of Drugs Without Side-Effects},
  journal      = {{SIAM} J. Comput.},
  volume       = {32},
  number       = {4},
  pages        = {1073--1090},
  year         = {2003},
  url          = {https://doi.org/10.1137/S0097539701397825},
  doi          = {10.1137/S0097539701397825},
  bibsource    = {dblp computer science bibliography, https://dblp.org}
}

@article{MaS09,
  author       = {Bin Ma and
                  Xiaoming Sun},
  title        = {More Efficient Algorithms for Closest String and Substring Problems},
  journal      = {{SIAM} J. Comput.},
  volume       = {39},
  number       = {4},
  pages        = {1432--1443},
  year         = {2009},
  url          = {https://doi.org/10.1137/080739069},
  doi          = {10.1137/080739069},
  bibsource    = {dblp computer science bibliography, https://dblp.org}
}

@article{ChenMW16,
  author       = {Zhi{-}Zhong Chen and
                  Bin Ma and
                  Lusheng Wang},
  title        = {Randomized Fixed-Parameter Algorithms for the Closest String Problem},
  journal      = {Algorithmica},
  volume       = {74},
  number       = {1},
  pages        = {466--484},
  year         = {2016},
  url          = {https://doi.org/10.1007/s00453-014-9952-y},
  doi          = {10.1007/S00453-014-9952-Y},
  bibsource    = {dblp computer science bibliography, https://dblp.org}
}

@article{ChenMW12,
  author       = {Zhi{-}Zhong Chen and
                  Bin Ma and
                  Lusheng Wang},
  title        = {A three-string approach to the closest string problem},
  journal      = {J. Comput. Syst. Sci.},
  volume       = {78},
  number       = {1},
  pages        = {164--178},
  year         = {2012},
  url          = {https://doi.org/10.1016/j.jcss.2011.01.003},
  doi          = {10.1016/J.JCSS.2011.01.003},
  bibsource    = {dblp computer science bibliography, https://dblp.org}
}

@article{GrammNR03,
  author       = {Jens Gramm and
                  Rolf Niedermeier and
                  Peter Rossmanith},
  title        = {Fixed-Parameter Algorithms for {CLOSEST} {STRING} and Related Problems},
  journal      = {Algorithmica},
  volume       = {37},
  number       = {1},
  pages        = {25--42},
  year         = {2003},
  url          = {https://doi.org/10.1007/s00453-003-1028-3},
  doi          = {10.1007/S00453-003-1028-3},
  bibsource    = {dblp computer science bibliography, https://dblp.org}
}

@article{FrancesL97,
  author       = {Moti Frances and
                  Ami Litman},
  title        = {On Covering Problems of Codes},
  journal      = {Theory Comput. Syst.},
  volume       = {30},
  number       = {2},
  pages        = {113--119},
  year         = {1997},
  url          = {https://doi.org/10.1007/s002240000044},
  doi          = {10.1007/S002240000044},
  bibsource    = {dblp computer science bibliography, https://dblp.org}
}

@article{LanctotLMWZ03,
  author       = {J. Kevin Lanct{\^{o}}t and
                  Ming Li and
                  Bin Ma and
                  Shaojiu Wang and
                  Louxin Zhang},
  title        = {Distinguishing string selection problems},
  journal      = {Inf. Comput.},
  volume       = {185},
  number       = {1},
  pages        = {41--55},
  year         = {2003},
  url          = {https://doi.org/10.1016/S0890-5401(03)00057-9},
  doi          = {10.1016/S0890-5401(03)00057-9},
  bibsource    = {dblp computer science bibliography, https://dblp.org}
}

@inproceedings{WangZ09,
  author       = {Lusheng Wang and
                  Binhai Zhu},
  title        = {Efficient Algorithms for the Closest String and Distinguishing String Selection Problems},
  booktitle    = {Frontiers in Algorithmics},
  year         = {2009},
  publisher    = {Springer Berlin Heidelberg},
  pages        = {261--270},
  doi          = {10.1007/978-3-642-02270-8_27},
}

@inproceedings{ZhaoZ10,
  author       = {Ruixuan Zhao and
                  Ning Zhang},
  title        = {A More Efficient Closest String Problem.},
  booktitle    = {2nd International Conference on Bioinformatics and Computational Biology},
  year         = {2010},
  pages        = {210--215},
}

@article{ChenW11,
  author       = {Zhi{-}Zhong Chen and
                  Lusheng Wang},
  title        = {Fast Exact Algorithms for the Closest String and Substring Problems
                  with Application to the Planted (L, d)-Motif Model},
  journal      = {{IEEE} {ACM} Trans. Comput. Biol. Bioinform.},
  volume       = {8},
  number       = {5},
  pages        = {1400--1410},
  year         = {2011},
  url          = {https://doi.org/10.1109/TCBB.2011.21},
  doi          = {10.1109/TCBB.2011.21},
  bibsource    = {dblp computer science bibliography, https://dblp.org}
}

@inproceedings{GrammHN02,
  title        = {Closest strings, primer design, and motif search},
  author       = {Jens Gramm and
                  Falk H{\"u}ffner and
                  Rolf Niedermeier},
  booktitle    = {Currents in Computational Molecular Biology, poster abstracts of RECOMB},
  volume       = {2002},
  pages        = {74--75},
  year         = {2002},
}

\end{document}